\theoremstyle{plain}
\newtheorem{theorem}{Theorem}
\newtheorem{claim}[theorem]{Claim}
\newtheorem{definition}{Definition}
\theoremstyle{remark}
\newtheorem{remark}{Remark}
\newcommand{\bit}{\begin{itemize}}
\newcommand{\eit}{\end{itemize}}
\newcommand{\ben}{\begin{enumerate}}
\newcommand{\een}{\end{enumerate}}
\newcommand{\be}{\begin{equation}}
\newcommand{\ee}{\end{equation}}
\newcommand{\ba}{\begin{array}}
\newcommand{\ea}{\end{array}}
\newcommand{\mc}[1]{\mathcal{#1}}
\newcommand{\dt}{\mathrm{d}t}
\newcommand{\ds}{\mathrm{d}s}
\newcommand{\eps}{\varepsilon}
\newcommand{\inner}[2]{\left\langle #1,#2 \right\rangle}
\newcommand\cB{\mathcal B}
\newcommand\cH{\mathcal H}
\newcommand\C{\mathbb C}
\newcommand\R{\mathbb R}
\newcommand\N{\mathbb N}
\renewcommand\l{\lambda}
\newcommand\bh{\cB(\cH)}
\newcommand\tr{\operatorname{Tr}}
\newcommand{\ler}[1]{\left( #1 \right)}
\begin{document}

\title{Jointly convex quantum Jensen divergences}

\author{D\'aniel Virosztek}
\address{Institute of Science and Technology Austria\\
Am Campus 1, 3400 Klosterneuburg, Austria}
\email{daniel.virosztek@ist.ac.at}
\urladdr{http://pub.ist.ac.at/\~{}dviroszt}

\thanks{The author was supported by the ISTFELLOW program of the Institute of Science and Technology Austria (project code IC1027FELL01) and partially supported by the Hungarian National Research, Development and Innovation Office – NKFIH (grant no. K124152).}

\dedicatory{To my grandparents, Ir\'en Engler and Dr. Ferenc Erd\H{o}si}

\keywords{Jensen divergences; joint convexity}
\subjclass[2010]{Primary: 15B57, 47A60.}

\begin{abstract}
We investigate the quantum Jensen divergences from the viewpoint of joint convexity. It turns out that the set of the functions which generate jointly convex quantum Jensen divergences on positive matrices coincides with the Matrix Entropy Class which has been introduced by Chen and Tropp quite recently \cite{chen-tropp}.
\end{abstract}
\maketitle

\section{Introduction} \label{sec:intro}

\subsection{Motivation}
Measuring the dissimilarities of objects is of a particular importance in many areas of mathematics and mathematical physics. From our viewpoint, the most important area with the above property is the field of classical and quantum information theory.
\par
The measure of the dissimilarity of elements of a certain space may be a true metric, but there are several important examples when this is not the case. In classical information theory, the most fundamental quantity measuring the dissimilarity of probability distributions is the \emph{relative entropy} (or \emph{Kullback-Leibler divergence} \cite{kullback-leibler}) which is far from being a true metric. Indeed, the relative entropy is not symmetric and it does not satisfy the triangle inequality either.
Despite these "defects", the Kullback-Leibler divergence (and its non-commutative analogue, the \emph{Umegaki relative entropy}) has several very useful properties which make it a central quantity of the classical (respectively, quantum) information theory. One of these useful properties is the \emph{joint convexity}. (A two-variable function $h$ defined on a convex set $C$ is said to be jointly convex if $h\ler{\sum_{j=1}^m \alpha_j x_j, \sum_{j=1}^m \alpha_j y_j} \leq  \sum_{j=1}^m \alpha_j h \ler{x_j, y_j}$ holds for any $x_1, \dots, x_m, y_1, \dots, y_m \in C$ and $\alpha_1, \dots, \alpha_m \in [0,1]$ with $\sum_{j=1}^m \alpha_j =1.$) For \emph{homogeneous} relative entropy-type quantities, joint convexity is equivalent to the monotonicity under stochastic maps (both in classical and in quantum information theory; the non-commutative counterpart of the stochastic map is the \emph{completely positive trace preserving} (CPTP) map). For details, see \cite[remarks after Def. 2.3]{les-rus}.
\par
The aim of this paper is to investigate \emph{quantum Jensen divergences} from the viewpoint of joint convexity. Given a convex function $\varphi$ on a convex set $C$ and a number $\lambda \in (0,1),$ the Jensen divergence (which will be denoted by $j_{\varphi, \l}(.,.)$) of the objects $a,b \in C$ is nothing else but the difference between the two sides of the Jensen inequality, that is,
$$
j_{\varphi, \l}(a,b)=(1-\l)\varphi(a) + \l \varphi(b) -\varphi\ler{(1-\l)a+\l b}.
$$
By quantum Jensen divergences we mean Jensen divergences of self-adjoint matrices which are generated by functions of the form $A \mapsto \tr f(A).$ (Note that the convexity of the function $f: \R \rightarrow \R$ ensures the convexity of the map $A \mapsto \tr f(A)$ on the set of self-adjoint matrices, see, e.g., \cite[Thm. 2.10.]{carlen}.) For the precise definition of quantum Jensen divergences, see Definition \ref{def:jensen-div}.
\par
In quantum information theory, we are mainly interested in Jensen divergences of positive matrices (or even more restictively, density matrices). Certain special quantum Jensen divergences (e.g., the quantum Jensen-Shannon divergence) have interesting metric properties, as well (see \cite{briet-harremoes} and \cite{lamberti}). The generalized isometries of the Jensen divergences on positive definite matrices have been determined in \cite{mpv-15}. 
\par
The main result of our paper is that for a convex function $f$ defined on the positive half-axis the quantum Jensen divergence induced by the map $A \mapsto \tr f(A)$ is jointly convex on the set of positive definite matrices if and only if the generating function $f$ belongs to the \emph{Matrix Entropy Class} which is an interesting function class introduced by Chen and Tropp in \cite{chen-tropp} quite recently. This result may be considered as a continuation of our previous work on the joint convexity of \emph{Bregman divergences} \cite{pv15}.

\subsection{Basic notions, notation}
We denote the set of all complex $n \times n$ matrices by $M_n(\C)$ and the sets of all self-adjoint, positive semidefinite, and positive definite complex $n \times n$ matrices are denoted by $M_n^{sa}(\C), \, M_n^{+}(\C),$ and $M_n^{++}(\C),$ respectively. The spectrum of a matrix $A \in M_n(\C)$ is denoted by $\sigma(A).$ We denote the interior of a set $X$ by $\mathrm{int}(X).$ Throughout this paper, the symbol $\cH$ stands for a Hilbert space (either real or complex), and $\bh$ denotes the set of all bounded linear operators on $\cH.$ We denote the sets of all self-adjoint, positive semidefinite, and positive definite bounded linear operators on $\cH$ by $\bh^{sa}, \bh^{+},$ and $\bh^{++},$ respectively. There is a natural partial order on the set of the self-adjoint operators, the \emph{L\"owner order.} This is the partial order defined by the cone of the positive semidefinite operators, that is, $A \leq B$ if and only if $B-A \in \bh^{+}.$ Recall that the set $M_n^{sa}(\C)$ is a real Hilbert space of dimension $n^2$ with the \emph{Hilbert-Schmidt} inner product $\inner{A}{B}_{HS}=\tr A B.$
\par
If $f: \R \supseteq \mathrm{dom} f \rightarrow \R$ is a function and $A \in M_n^{sa}(\C)$ such that $\sigma(A) \subset \mathrm{dom} f$ then the expression $f(A)$ is defined by the functional calculus, that is, $f(A)=\sum_k f\ler{\lambda_k} P_k,$ where the $\lambda_k'$s are the eigenvalues and the $P_k$'s are the eigenprojections of $A=\sum_k \lambda_k P_k.$ If $\mathrm{dom} f$ is an open interval $(a,b) \subseteq \R$ and $f \in C^1(a,b)$ then the map
\be \label{eq:func-calc}
\left\{A \in M_n^{sa}(\C) \, \middle| \, \sigma(A) \subset (a,b)\right\} \rightarrow M_n^{sa}(\C); \, A \mapsto f(A)
\ee
is \emph{Fr\'echet differentiable} at every point of its domain. The Fr\'echet derivative of the map \eqref{eq:func-calc} at the point $A$ is denoted by $\mathbf{D}f [A]\{.\}$ (or simply $\mathbf{D}f[A]$), and $\mathbf{D}f [A]\{.\} \in \cB\ler{M_n^{sa}(\C)}^{sa}.$ For every $B \in M_n^{sa}(\C),$ the equation
$$
\lim_{t \to 0} \frac{1}{t}\ler{f\ler{A+tB}-f(A)}=\mathbf{D}f[A]\{B\}
$$
holds. Moreover, $\mathbf{D}f [A]\{.\}$ can be given in a very concrete form. Assume that $A \in M_n^{sa}(\C)$ admits the spectral decomposition $A=\sum_{k=1}^n \lambda_k v_k \otimes v_k,$ where $v_1, \dots, v_n$ is an orthonormal basis in $\C^n$ and for $u,v \in \C^n$ the symbol $u \otimes v$ denotes the linear map $u \otimes v: \C^n \rightarrow \C^n; \, w \mapsto  u \otimes v (w):=\inner{w}{v} u.$
Let us introduce the short notation $E_{jk}:=v_j \otimes v_k \, (j,k \in \{1, \dots, n\}).$ The set of "matrix units" $\left\{E_{jk}\right\}_{j,k=1}^n$ form an orthonormal basis of $M_n^{sa}(\C)$ with respect to the Hilbert-Schmidt inner product.
The map $\mathbf{D}f [A]\{.\} \in \cB\ler{M_n^{sa}(\C)}^{sa}$ is given by
\be \label{eq:frech-diff}
M_n^{sa}(\C) \ni \sum_{j,k=1}^n \alpha_{jk} E_{jk} \mapsto \sum_{j,k=1}^n \alpha_{jk} f^{[1]}\left[\l_j, \l_k\right] E_{jk}, \text{ where } f^{[1]}\left[x,y\right]=\begin{cases}\frac{f(x)-f(y)}{x-y} \text{ if } x \neq y, \\ f'(x) \text{ if } x=y.\end{cases}
\ee
For details, the reader is advice to consult \cite[Thm. 3.33]{HP-book} and also \cite[Thm. 3.25]{HP-book}.

\begin{definition} \label{def:jensen-div}
Let $I \subseteq \R$ be an interval and let $f$ be a continuous convex function defined on $I.$ Set $\l \in (0,1).$ The \emph{quantum Jensen $(f,\l)$-divergence} of the self-adjoint matrices $A$ and $B$ with $\sigma(A) \cup \sigma(B) \subseteq I$ is denoted by $J_{f,\l}(A,B)$ and it is defined by
\be \label{eq:q-jensen-div}
J_{f, \l}(A,B):=(1-\l)\tr f(A)+\l \tr f(B) -\tr f \ler{(1-\l)A+\l B}.
\ee
\end{definition}

\section{The main result}

In the paper \cite{chen-tropp} Chen and Tropp introduced a function class which they called the \emph{Matrix Entropy Class.} Let us recall the definition of that function class.

\begin{definition} \label{def:MEC}
The \emph{Matrix Entropy Class} consists of functions $f: [0, \infty) \rightarrow \R$ which are either affine or satisfy the following conditions:
\bit
\item $f$ is continuous and convex, and $f \in C^2((0,\infty)),$
\item for every $n \in \N$ and for every $A \in M_n^{++}(\C)$ the linear map $\mathbf{D}f'[A] \in \cB\ler{M_n^{sa}(\C)}^{sa}$ is invertible and the map $M_n^{++}(\C) \rightarrow \cB\ler{M_n^{sa}(\C)}^{sa};\ A \mapsto \ler{\mathbf{D}f'[A]}^{-1}$ is concave with respect to the L\"owner order.
\eit
\end{definition}

\begin{remark} \label{rem:of}
Note that for a convex function $f \in C^2((0,\infty))$ the existence of $\ler{\mathbf{D}f'[A]}^{-1}$ for every $A \in M_n^{++}(\C)$ is equivalent to the property $f''>0$ on $(0,\infty).$ Indeed, convexity implies that $f''\geq 0$ on $(0, \infty),$ and if $f''(c)=0$ for some $c \in (0,\infty)$ then formula \eqref{eq:frech-diff} clearly shows that $\mathbf{D}f'[c I_n]=0 \in \cB\ler{M_n^{sa}(\C)}^{sa}$ which is not invertible. (Here the symbol $I_n$ denoted the identity element of $M_n(\C).$) On the other hand, formula \eqref{eq:frech-diff} shows also that the property $f''>0$ ensures that $\mathbf{D}f'[A]$ is a positive definite and hence invertible operator for every $A \in M_n^{++}(\C).$
\end{remark}

\begin{remark} \label{rem:kernel}
Let us introduce the notation $\mc{C}:=\left\{f \,\middle|\, f: [0, \infty) \rightarrow \R, \, f \text{ is continuous and convex} \right\}.$ Clearly, $\mc{C}$ is a convex cone. Let $\lambda \in (0,1)$ be fixed. The map
$$
\mc{C} \rightarrow [0,\infty)^{M_n^{+}(\C) \times M_n^{+}(\C)}; \, f \mapsto J_{f, \l}(.,.)
$$
is additive and positive homogeneous, and the kernel coincides with the set of the affine functions. That is, $J_{f, \l}(.,.)=0$ if and only if $f(x)=\alpha x + \beta$ for some $\alpha, \beta \in \R.$ This means that affine functions are absolutely not interesting from the viewpoint of Jensen divergences.  
\end{remark}
Now we are in the position to state our main result that characterizes the jointly convex quantum Jensen divergences.
\begin{theorem} \label{thm:main}
Let $f$ be a continuous convex function on $[0, \infty)$ such that $f \in C^2((0, \infty))$ and $f''>0$ on $(0, \infty).$ The followings are equivalent.
\ben [label=(\Roman*)]
\item \label{prop:mec} The function $f$ belongs to the Matrix Entropy Class.
\item \label{prop:jc-jen-div} The map
$$
J_{f,\l}(.,.): \, M_n^{++}(\C) \times M_n^{++}(\C) \rightarrow [0,\infty); \, (A,B) \mapsto J_{f, \l}(A,B)
$$
is jointly convex.
\een
\end{theorem}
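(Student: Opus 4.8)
The plan is to connect both conditions to a single intermediate quantity: the second-order behaviour of the map $A\mapsto\tr f(A)$, i.e. its Hessian. First I would observe that joint convexity of $J_{f,\l}$ on $M_n^{++}(\C)\times M_n^{++}(\C)$ is, since $J_{f,\l}$ is built from the strictly convex function $\Phi(A):=\tr f(A)$, equivalent to an infinitesimal statement about $\Phi$. Precisely, writing out the defining inequality of joint convexity along a segment and letting the weights degenerate, one sees (this is the standard reduction, cf.\ the Bregman-divergence analysis in \cite{pv15}) that $J_{f,\l}$ is jointly convex for one — equivalently, for all — $\l\in(0,1)$ if and only if the second Fréchet derivative $\mathbf{D}^2\Phi[A]$, viewed as a quadratic form on $M_n^{sa}(\C)$, is itself a \emph{concave} function of $A\in M_n^{++}(\C)$ in the sense that $A\mapsto\mathbf{D}^2\Phi[A]\{X,X\}$ is concave for each fixed direction $X$. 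So Step 1 is this reduction of \ref{prop:jc-jen-div} to concavity of the Hessian of $\tr f$.

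Step 2 is to identify $\mathbf{D}^2\Phi[A]$ with the operator $\mathbf{D}f'[A]$ appearing in the Matrix Entropy Class. A direct computation using \eqref{eq:frech-diff} (differentiating $B\mapsto\tr f(B)$ twice, or equivalently differentiating $B\mapsto \tr\!\big(f'(B)\,X\big)$ once) gives
$$
\mathbf{D}^2\Phi[A]\{X,X\}=\inner{X}{\mathbf{D}f'[A]\{X\}}_{HS}=\sum_{j,k=1}^n f^{[1]}[\l_j,\l_k]\,\abs{X_{jk}}^2,
$$
where $X=\sum_{j,k}X_{jk}E_{jk}$ in the eigenbasis of $A$. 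Since $f''>0$, Remark \ref{rem:of} tells us $\mathbf{D}f'[A]$ is positive definite, hence invertible, so the first bullet of Definition \ref{def:MEC} is automatic under our hypotheses and only the concavity clause is at issue. Thus \ref{prop:mec} asks precisely that $A\mapsto\big(\mathbf{D}f'[A]\big)^{-1}$ be Löwner-concave.

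Step 3, the crux, is the equivalence: \emph{$A\mapsto\mathbf{D}f'[A]$ has a Löwner-concave inverse if and only if $A\mapsto\mathbf{D}f'[A]$ (equivalently the Hessian quadratic form $A\mapsto\inner{X}{\mathbf{D}f'[A]\{X\}}_{HS}$) is concave.} One direction is a general fact about operator-valued maps on a convex set: if $T(A)$ is positive definite and $A\mapsto T(A)^{-1}$ is Löwner-concave, then $A\mapsto T(A)$ is Löwner-convex? No — the correct statement, and the one I would prove, is a genuine \emph{equivalence} at the level of pointwise (scalar) quadratic forms, exploiting that we may test against a fixed vector: for a positive scalar function $g$ on a convex set, $1/g$ concave is \emph{not} generally equivalent to $g$ concave, so the matching must use the full operator structure — specifically that concavity of $A\mapsto\mathbf{D}f'[A]$ in the Löwner order combined with its invertibility yields concavity of the inverse via operator convexity of $t\mapsto t^{-1}$ on $(0,\infty)$ together with the fact that $t\mapsto -t^{-1}$ is operator \emph{monotone} and operator \emph{concave}; conversely, Löwner-concavity of the inverse plus operator convexity of inversion forces Löwner-concavity of $\mathbf{D}f'$ itself. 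I would make this rigorous by noting that for a map $A\mapsto T(A)\in\bh^{++}$, Löwner-convexity of $T$ and Löwner-concavity of $T^{-1}$ are in fact \emph{equivalent} precisely because inversion is operator convex and operator antitone: $T^{-1}$ concave $\Rightarrow$ for the midpoint, $\big(\tfrac{T(A)+T(B)}{2}\big)^{-1}\le\tfrac12 T(A)^{-1}+\tfrac12 T(B)^{-1}$... — and here I expect the main obstacle: disentangling which of "Hessian concave" versus "inverse-Hessian concave" is the right bridging property and proving the equivalence cleanly. I anticipate the resolution is that Step 1 must be pushed further so that \ref{prop:jc-jen-div} is equivalent not to concavity of the Hessian but directly to Löwner-concavity of $\big(\mathbf{D}f'[A]\big)^{-1}$; this likely comes from a variational/Legendre-transform representation of $J_{f,\l}$ (the same mechanism underlying Bregman-divergence duality in \cite{pv15}), in which the divergence is expressed via a supremum involving $(\mathbf{D}f')^{-1}$, making its joint convexity manifestly equivalent to the concavity demanded in Definition \ref{def:MEC}. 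Once that representation is in hand, both implications \ref{prop:mec}$\Leftrightarrow$\ref{prop:jc-jen-div} follow, with the $n$-independence of the Matrix Entropy Class condition matching the fact that joint convexity is required for every fixed $n$.
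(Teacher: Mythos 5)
There is a genuine gap, in two places. First, your Step 1 reduces joint convexity of $J_{f,\l}$ to the wrong infinitesimal condition: what matters is not concavity of $A\mapsto \inner{\mathbf{D}f'[A]\{X\}}{X}_{HS}$ for each fixed direction $X$, but \emph{joint} convexity of the two-variable map $(A,X)\mapsto \inner{\mathbf{D}f'[A]\{X\}}{X}_{HS}$ on $M_n^{++}(\C)\times M_n^{sa}(\C)$. The paper makes this reduction precise via the integral representation of Claim \ref{claim:int-rep}: $J_{f,\l}(A,B)$ is an average of terms $\inner{\mathbf{D}f'\left[\xi_{\l,t,s}(A,B)\right]\{B-A\}}{B-A}_{HS}$ in which the base point $\xi_{\l,t,s}(A,B)$ is affine in $(A,B)$ and the direction $B-A$ is linear in $(A,B)$; joint convexity of the quadratic form in (base point, direction) then passes through the integral, and conversely the quadratic-form inequality is recovered from joint convexity of $J_{f,\l}$ by testing on pairs $\ler{A_j, A_j+\eps B_j}$, dividing by $\eps^2$ and letting $\eps\to 0$. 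Your ``standard reduction'' is asserted but never carried out, and in the form you state it (concavity in $A$ at fixed $X$) it is not the correct bridging statement.

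Second, the crux of your Step 3 rests on a false equivalence. For a map $T$ into $\bh^{++}$, L\"owner-convexity of $T$ is \emph{not} equivalent to L\"owner-concavity of $T^{-1}$: concavity of $T^{-1}$ does imply convexity of $T$ (antitonicity plus operator convexity of inversion), but the converse fails already for scalars, e.g.\ $g(t)=e^{t}$ is positive and convex while $1/g=e^{-t}$ is convex, not concave. The bridge that actually works --- and is exactly what the paper proves in Claim \ref{claim:conc-conv} (cf.\ Hansen and Zhang) --- is: $A\mapsto\ler{\mathbf{D}f'[A]}^{-1}$ is L\"owner-concave if and only if $(A,X)\mapsto\inner{\mathbf{D}f'[A]\{X\}}{X}_{HS}$ is \emph{jointly} convex; its proof needs the Lieb--Ruskai convexity of $(T,v)\mapsto\inner{T^{-1}v}{v}$ (for one direction) and a suitable choice of test vectors (for the other), which is genuinely stronger input than operator convexity and antitonicity of inversion alone. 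Your anticipated rescue via a Legendre-type variational representation of $J_{f,\l}$ is only conjectured, not provided, so as it stands the argument does not close; your Step 2 (identifying the Hessian of $\tr f$ with $\mathbf{D}f'[A]$) is correct and coincides with what the paper uses.
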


\section{The proof of the main result}

The proof of Theorem \ref{thm:main} has essentially two ingredients. The first one is described in Subsection \ref{subsec:int-rep} while Subsection \ref{subsec:conc-conv} is devoted to the second one. Subsection \ref{subsec:proof} contains the proof itself which refers to the results of the previous subsections.

\subsection{An integral representation of Jensen divergences} \label{subsec:int-rep}
The following integral representation of quantum Jensen divergences turns out to be useful in the sequel, although it looks a bit more complicated than the definition itself.

\begin{claim} \label{claim:int-rep}
Assume that $I \subseteq \R$ is an interval and $f: I\rightarrow \R$ is continuous and convex. Furthermore, assume that $f \in C^2\ler{\mathrm{int}(I)}$ and $A, B \in M_n^{sa}(\C)$ with $\sigma(A) \cup \sigma(B) \subset \mathrm{int}(I).$ Then the quantum Jensen $(f, \l)$-divergence $J_{f, \l}(A,B)$ (defined in \eqref{eq:q-jensen-div}) admits the following integral representation:
\be \label{eq:jen_int_repr}
J_{f,\l}(A,B)=\l (1-\l)\int_{0}^{1}\ler{(1-t)\l+t(1-\l)} \int_{0}^1 \inner{\mathbf{D}f'\left[\xi_{\l,t,s}(A,B)  \right]\left\{B-A\right\}}{B-A}_{HS} \ds \dt,
\ee
where
\be \label{eq:xi-def}
\xi_{\l,t,s}(A,B)=A+t \l (B-A)+s\ler{(1-t)\l+t(1-\l)}(B-A).
\ee
\end{claim}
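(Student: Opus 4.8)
The plan is to reduce the matrix identity \eqref{eq:jen_int_repr} to an elementary one–variable calculus identity by restricting the trace functional $A \mapsto \tr f(A)$ to the line segment joining $A$ and $B.$ Concretely, set $C := B - A$ and define the scalar function
$$
\phi(u) := \tr f\ler{A + uC}, \qquad u \in (-\delta, 1+\delta),
$$
for a small $\delta > 0$ chosen so that $\sigma(A + uC) \subset \mathrm{int}(I)$ for every such $u;$ this is possible because $\sigma(A) \cup \sigma(B) \subset \mathrm{int}(I),$ the set $\left\{A + uC \,\middle|\, u \in [0,1]\right\}$ is compact, and spectra depend continuously on the matrix. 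Since $f \in C^2\ler{\mathrm{int}(I)},$ the function $\phi$ is $C^2$ on its domain. Unwinding Definition \ref{def:jensen-div} and using $A + \l C = (1-\l)A + \l B,$ we obtain the clean expression
$$
J_{f,\l}(A,B) = (1-\l)\phi(0) + \l \phi(1) - \phi(\l).
$$

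Next I would record the first two derivatives of $\phi.$ By the standard formula for the derivative of a trace function (a direct consequence of \eqref{eq:frech-diff}; see \cite[Thm. 3.33]{HP-book}), one has $\phi'(u) = \tr\ler{\mathbf{D}f[A+uC]\{C\}} = \tr\ler{f'(A+uC)\, C},$ and differentiating once more (using $f' \in C^1(\mathrm{int}(I))$),
$$
\phi''(u) = \tr\ler{\mathbf{D}f'[A+uC]\{C\}\, C} = \inner{\mathbf{D}f'\left[A+uC\right]\{C\}}{C}_{HS}.
$$
Observe now that $\xi_{\l,t,s}(A,B)$ as defined in \eqref{eq:xi-def} is exactly $A + u(t,s)\,C$ with $u(t,s) = t\l + s\,w_t,$ where I abbreviate $w_t := (1-t)\l + t(1-\l);$ hence the integrand on the right-hand side of \eqref{eq:jen_int_repr} is precisely $\phi''\ler{t\l + s\, w_t}.$

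Then it remains to evaluate the iterated integral on the right-hand side of \eqref{eq:jen_int_repr}. For fixed $t \in [0,1]$ the substitution $u = t\l + s\, w_t$ (so that $\ds = \mathrm{d}u / w_t,$ and $u$ runs from $t\l$ to $t\l + w_t = \l + t(1-\l)$) gives
$$
\int_0^1 w_t\, \phi''\ler{t\l + s\, w_t}\, \ds = \int_{t\l}^{\l + t(1-\l)} \phi''(u)\, \mathrm{d}u = \phi'\ler{\l + t(1-\l)} - \phi'(t\l).
$$
Integrating in $t$ and using the substitutions $v = \l + t(1-\l)$ and $v = t\l$ respectively yields $\int_0^1 \phi'\ler{\l+t(1-\l)}\dt = \tfrac{1}{1-\l}\ler{\phi(1)-\phi(\l)}$ and $\int_0^1 \phi'(t\l)\dt = \tfrac{1}{\l}\ler{\phi(\l)-\phi(0)},$ so the full right-hand side of \eqref{eq:jen_int_repr} equals
$$
\l(1-\l)\left[\tfrac{1}{1-\l}\ler{\phi(1)-\phi(\l)} - \tfrac{1}{\l}\ler{\phi(\l)-\phi(0)}\right] = (1-\l)\phi(0) + \l\phi(1) - \phi(\l) = J_{f,\l}(A,B),
$$
which is the claim.

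I do not expect a genuine obstacle here: the content of the statement is just that the matrix Jensen divergence, read along the relevant segment, is the scalar Jensen divergence $(1-\l)\phi(0)+\l\phi(1)-\phi(\l),$ and the displayed double integral is one of the standard integral–remainder forms of this one–dimensional quantity after the change of variables $u = t\l + s\,w_t.$ The only points requiring (routine) care are the justification that $\phi$ is $C^2$ — a domain and compactness remark — and the second–derivative identity $\phi''(u) = \inner{\mathbf{D}f'[A+uC]\{C\}}{C}_{HS},$ both of which are standard facts about trace functions implicit in the formulas quoted in the introduction.
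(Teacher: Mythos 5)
Your proof is correct and is essentially the paper's own argument: both amount to two applications of the Newton--Leibniz formula along the segment from $A$ to $B,$ using $\frac{\mathrm{d}}{\mathrm{d}\tau}\tr f(X+\tau Y)=\tr f'(X)Y$ and the identification of the second derivative with $\inner{\mathbf{D}f'[\cdot]\{B-A\}}{B-A}_{HS}.$ You merely package the computation through the scalar function $\phi(u)=\tr f(A+u(B-A))$ and run it from the right-hand side back to $J_{f,\l}(A,B),$ whereas the paper derives the double integral forward from the definition; the content is the same.
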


\begin{proof}
Clearly,
$$
J_{f, \l}(A,B)=\l \ler{\tr f(B)-\tr f\ler{(1-\l)A+\l B}}+(1-\l)\ler{\tr f(A)-\tr f\ler{(1-\l)A+\l B}}.
$$
By the Newton-Leibniz formula, the above equation can be written as follows:
$$
J_{f, \l}(A,B)=\l \int_0^1 \frac{\mathrm{d}}{\dt}\left\{\tr f\ler{A+\l(B-A)+t(1-\l)(B-A)}\right\} \dt -(1-\l) \int_0^1 \frac{\mathrm{d}}{\dt} \left\{\tr f\ler{A+t \l (B-A)}\right\} \dt
$$
Now we use the well-known identity (see, e.g., \cite[Thm. 3.23]{HP-book})
$$
\frac{\mathrm{d}}{\mathrm{d}\tau} \tr f\ler{X+\tau Y}_{|\tau=0}=\tr f'(X)Y
$$
to deduce that
$$
J_{f, \l}(A,B)=\l \int_0^1 \tr f'\ler{A+\l(B-A)+t(1-\l)(B-A)} (1-\l)(B-A) \dt -
$$
$$
-(1-\l) \int_0^1 \tr f'\ler{A+t \l (B-A)}\l (B-A) \dt,
$$
that is,
\be \label{eq:jlf1}
J_{f, \l}(A,B)=\l (1-\l)\int_0^1\tr \left\{f'\ler{A+\l(B-A)+t(1-\l)(B-A)}-f'\ler{A+t \l (B-A)}\right\}(B-A) \dt.
\ee
The (appropriate version of the) Newton-Leibniz formula tells us that
$$
f'\ler{A+\l(B-A)+t(1-\l)(B-A)}-f'\ler{A+t \l (B-A)}
$$
$$
=\int_0^1 \frac{\mathrm{d}}{\ds} \left\{f'\ler{A+t\l (B-A)+s\ler{(1-t)\l +t(1-\l)}(B-A)}\right\}\ds.
$$
Clearly, this latter expression is equal to
\be \label{eq:belso-int-rep}
\int_0^1 \mathbf{D}f' \left[A+t\l (B-A)+s\ler{(1-t)\l +t(1-\l)}(B-A)\right]\left\{\ler{(1-t)\l +t(1-\l)}(B-A)\right\}\ds.
\ee
Now let us use the notation $\xi_{\l,t,s}(A,B)=A+t \l (B-A)+s\ler{(1-t)\l+t(1-\l)}(B-A)$ and substitute the expression \eqref{eq:belso-int-rep} into \eqref{eq:jlf1}. We get that
$$
J_{f, \l}(A,B)=\l (1-\l)\int_0^1\tr \left\{\int_0^1 \mathbf{D}f' \left[\xi_{\l,t,s}(A,B)\right]\left\{\ler{(1-t)\l +t(1-\l)}(B-A)\right\}\ds\right\}(B-A) \dt
$$
$$
=\l (1-\l)\int_0^1 \ler{(1-t)\l +t(1-\l)}  \int_0^1 \tr \mathbf{D}f' \left[\xi_{\l,t,s}(A,B)\right]\left\{B-A\right\}(B-A)\ds \dt
$$
$$
=\l (1-\l)\int_{0}^{1}\ler{(1-t)\l+t(1-\l)} \int_{0}^1 \inner{\mathbf{D}f'\left[\xi_{\l,t,s}(A,B)  \right]\left\{B-A\right\}}{B-A}_{HS} \ds \dt,
$$
the proof is done.
\end{proof}

\subsection{The equivalence of a certain concavity property and a certain convexity property} \label{subsec:conc-conv}
The following claim is an essential part of our argument. This statement appeared in the proof of Thm. 1 in \cite{pv15} in a somewhat hidden way, and it also appeared in \cite[Thm 2.1]{hansen-zhang} in a slightly more concrete form and with a different proof.

\begin{claim} \label{claim:conc-conv}
Let $\ler{\cH, \inner{.}{.}}$ be a Hilbert space, let $C$ be a convex set and let $\varphi: C \rightarrow \bh^{++}$ be a map. The followings are equivalent.
\ben[label=(\Alph*)]
\item \label{prop:conc} The map $\psi: C \rightarrow \bh^{++}; \, x \mapsto \psi(x):=\ler{\varphi(x)}^{-1}$ is concave with respect to the L\"owner order.
\item \label{prop:conv} The map $\eta: C \times \cH \rightarrow [0,\infty); \, (x,v) \mapsto\eta(x,v):=\inner{\varphi(x)v}{v}$ is convex.
\een
\end{claim}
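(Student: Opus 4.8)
The plan is to prove the two implications separately, exploiting the fact that L\"owner concavity/convexity of operator-valued maps can be tested ``scalarly'' against fixed vectors, while the genuine content is the passage between $\varphi$ and its inverse $\psi = \varphi^{-1}$.

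\textbf{From \ref{prop:conc} to \ref{prop:conv}.} Assume $\psi = \varphi^{-1}$ is concave. Fix $x_1, x_2 \in C$, $\alpha \in [0,1]$, and write $x = \alpha x_1 + (1-\alpha) x_2$. Concavity of $\psi$ gives $\varphi(x)^{-1} = \psi(x) \geq \alpha \psi(x_1) + (1-\alpha)\psi(x_2) = \alpha \varphi(x_1)^{-1} + (1-\alpha)\varphi(x_2)^{-1}$ in the L\"owner order. The key analytic input is then the operator convexity of the inverse function $t \mapsto t^{-1}$ on $(0,\infty)$, or more precisely the fact that the map $T \mapsto T^{-1}$ is \emph{L\"owner-order reversing} and L\"owner-convex on $\bh^{++}$: combining these, from $\varphi(x)^{-1} \geq \alpha \varphi(x_1)^{-1} + (1-\alpha)\varphi(x_2)^{-1}$ one deduces $\varphi(x) \leq \ler{\alpha \varphi(x_1)^{-1} + (1-\alpha)\varphi(x_2)^{-1}}^{-1} \leq \alpha \varphi(x_1) + (1-\alpha)\varphi(x_2)$, where the last step is exactly operator convexity of inversion applied to the pair $\varphi(x_1), \varphi(x_2)$ with weights $\alpha, 1-\alpha$. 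Testing both sides against an arbitrary $v \in \cH$ yields $\eta(x,v) = \inner{\varphi(x)v}{v} \leq \alpha \inner{\varphi(x_1)v}{v} + (1-\alpha)\inner{\varphi(x_2)v}{v}$. Joint convexity in $(x,v)$ also requires handling the $v$-direction and mixed directions; here I would note that $(x,v) \mapsto \inner{\varphi(x)v}{v}$ restricted to the $v$-fibre is the quadratic form of a positive operator, hence convex, and then invoke the standard fact that a function which is convex in $x$ for each fixed $v$ and is built from a positive-operator-valued map is jointly convex — more robustly, I would verify joint convexity directly by expanding $\inner{\varphi(\alpha x_1 + (1-\alpha)x_2)(\alpha v_1 + (1-\alpha)v_2)}{\alpha v_1 + (1-\alpha)v_2}$, bounding the operator part as above and then using the elementary fact that $(S,w) \mapsto \inner{Sw}{w}$ is jointly convex on $\bh^{+} \times \cH$ (since its Hessian, a $2\times 2$-block form built from $S$, is positive).

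\textbf{From \ref{prop:conv} to \ref{prop:conc}.} Conversely, assume $\eta$ is jointly convex. Fix $x_1, x_2, \alpha$ and $x = \alpha x_1 + (1-\alpha)x_2$ as before; we must show $\varphi(x)^{-1} \geq \alpha \varphi(x_1)^{-1} + (1-\alpha)\varphi(x_2)^{-1}$, i.e. that for every $w \in \cH$, $\inner{\varphi(x)^{-1}w}{w} \geq \alpha \inner{\varphi(x_1)^{-1}w}{w} + (1-\alpha)\inner{\varphi(x_2)^{-1}w}{w}$. The natural tool is the variational (Legendre-type) formula for the inverse quadratic form: for any $T \in \bh^{++}$ and $w \in \cH$,
$$
\inner{T^{-1}w}{w} = \max_{v \in \cH} \ler{2\,\mathrm{Re}\inner{w}{v} - \inner{Tv}{v}},
$$
the maximum being attained at $v = T^{-1}w$. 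Using joint \emph{convexity} of $(x,v)\mapsto\inner{\varphi(x)v}{v}$ — hence joint \emph{concavity} of $(x,v)\mapsto 2\,\mathrm{Re}\inner{w}{v} - \inner{\varphi(x)v}{v}$ for fixed $w$ — choose optimal vectors $v_i = \varphi(x_i)^{-1}w$ and evaluate at $v = \alpha v_1 + (1-\alpha)v_2$:
$$
\inner{\varphi(x)^{-1}w}{w} \geq 2\,\mathrm{Re}\inner{w}{v} - \inner{\varphi(x)v}{v} \geq \alpha\ler{2\,\mathrm{Re}\inner{w}{v_1} - \inner{\varphi(x_1)v_1}{v_1}} + (1-\alpha)\ler{2\,\mathrm{Re}\inner{w}{v_2} - \inner{\varphi(x_2)v_2}{v_2}} = \alpha\inner{\varphi(x_1)^{-1}w}{w} + (1-\alpha)\inner{\varphi(x_2)^{-1}w}{w}.
$$
Since $w$ was arbitrary, this is the desired L\"owner inequality, so $\psi$ is concave. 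Both directions extend verbatim from two points to $m$ points.

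\textbf{Main obstacle.} The routine parts are the spectral/variational identities; the part requiring care is the joint (as opposed to separate) convexity in \ref{prop:conv}. In the forward direction one must be careful that convexity of $x \mapsto \inner{\varphi(x)v}{v}$ for each fixed $v$ together with the obvious convexity in $v$ does \emph{not} by itself give joint convexity — so the cleanest route is the variational characterization above, which makes the forward direction symmetric to the reverse one: $\inner{\varphi(x)v}{v}$ is jointly convex iff its partial Legendre transform in $v$, namely $w \mapsto \inner{\varphi(x)^{-1}w}{w}$, is jointly concave in $(x,w)$, which is precisely the pointwise form of L\"owner concavity of $\psi$. I would therefore present the proof built around this Legendre-duality observation, which handles both implications uniformly and sidesteps any delicate Hessian computation.
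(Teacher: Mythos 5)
Your direction (B) $\Rightarrow$ (A) is correct and complete: the variational identity $\inner{T^{-1}w}{w}=\max_{v}\ler{2\,\mathrm{Re}\inner{w}{v}-\inner{Tv}{v}}$, evaluated at the optimal vectors $v_i=\varphi(x_i)^{-1}w$ and at their convex combination, is a clean partial-Legendre reformulation of the paper's argument (where the special vectors $w_i=\varphi(x_i)^{-1}\ler{\sum_j\alpha_j\varphi(x_j)^{-1}}^{-1}u$ play the same role), and it yields the L\"owner concavity of $\psi$ exactly as the paper does.

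The direction (A) $\Rightarrow$ (B), as actually written, has a genuine gap. The chain $\varphi(x)\leq\ler{\alpha\varphi(x_1)^{-1}+(1-\alpha)\varphi(x_2)^{-1}}^{-1}\leq\alpha\varphi(x_1)+(1-\alpha)\varphi(x_2)$, tested against a \emph{fixed} $v$, gives only convexity of $x\mapsto\inner{\varphi(x)v}{v}$ for each $v$, i.e.\ separate convexity; there is no ``standard fact'' upgrading separate convexity to joint convexity, and the ``elementary fact'' you invoke to patch this --- that $(S,w)\mapsto\inner{Sw}{w}$ is jointly convex on $\bh^{+}\times\cH$ --- is false. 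Already in the scalar case $(s,w)\mapsto sw^2$ fails: with $s_1=2,\,w_1=0$ and $s_2=0,\,w_2=2$ the midpoint value is $1$ while the average of the endpoint values is $0$; the Hessian $\begin{pmatrix}0&2w\\2w&2s\end{pmatrix}$ is indefinite for $w\neq 0$. The relaxation from the harmonic mean to the arithmetic mean is precisely where the strength of hypothesis (A) is thrown away: the paper keeps the bound $\varphi\ler{\sum_j\alpha_jx_j}\leq\ler{\sum_j\alpha_j\varphi(x_j)^{-1}}^{-1}$, tests it against the \emph{averaged} vector $\sum_j\alpha_jv_j$, and then uses the known joint convexity of $(T,v)\mapsto\inner{T^{-1}v}{v}$ (Lieb--Ruskai) with $T_j=\varphi(x_j)^{-1}$ to absorb the mixed $(x,v)$ directions. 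Alternatively, your own Legendre-duality remark does repair this half, but you never carry it out: writing $\inner{\varphi(x)v}{v}=\sup_{w}\ler{2\,\mathrm{Re}\inner{v}{w}-\inner{\varphi(x)^{-1}w}{w}}$, concavity of $x\mapsto\varphi(x)^{-1}$ makes each function inside the supremum jointly convex in $(x,v)$ (affine in $v$ plus convex in $x$), and a pointwise supremum of jointly convex functions is jointly convex. As submitted, the only detailed forward argument is the broken one, so this implication must be rewritten either along the paper's route or along the duality route you sketch at the end.
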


\begin{proof}
Let us show the direction \ref{prop:conc} $\Longrightarrow$ \ref{prop:conv} first. Let $x_1, \dots, x_m \in C$ and $v_1, \dots, v_m \in \cH$ be arbitrary and set $\alpha_1,\dots, \alpha_m \in [0,1]$ such that $\sum_{j=1}^m \alpha_j=1.$
By the concavity assumption \ref{prop:conc} we have
\be \label{eq:1conc}
\ler{\varphi\ler{\sum_{j=1}^m \alpha_j x_j}}^{-1} \geq \sum_{j=1}^m \alpha_j \ler{\varphi\ler{x_j}}^{-1}.
\ee
It is well-known that the function $t \mapsto 1/t$ is operator monotone decreasing on $(0,\infty),$ hence \eqref{eq:1conc} is equivalent to
\be \label{eq:2conc}
\varphi\ler{\sum_{j=1}^m \alpha_j x_j} \leq \ler{\sum_{j=1}^m \alpha_j \ler{\varphi\ler{x_j}}^{-1}}^{-1}.
\ee
By the definition of the L\"owner order, \eqref{eq:2conc} implies that
\be \label{eq:ingr1}
\inner{\varphi\ler{\sum_{j=1}^m \alpha_j x_j} \ler{\sum_{j=1}^m \alpha_j v_j}}{\sum_{j=1}^m \alpha_j v_j}
\leq \inner{\ler{\sum_{j=1}^m \alpha_j \ler{\varphi\ler{x_j}}^{-1}}^{-1}\ler{\sum_{j=1}^m \alpha_j v_j}}{\sum_{j=1}^m \alpha_j v_j}.
\ee
It is well-known that for any Hibert space $\cH$ the map
\be \label{eq:lieb-conv}
\zeta: \bh^{++} \times \cH \rightarrow [0,\infty);\, (T,v) \mapsto \zeta(T,v):=\inner{T^{-1} v}{v}
\ee
is convex. (This statement can be obtained easily as a consequence of \cite[Thm. 1]{lieb-rusk-schwarz}, and it appears in this form in, e.g., \cite[Prop. 4.3]{hansen-j-stat}.) By \eqref{eq:lieb-conv}, we obtain that
\be \label{eq:ingr2}
\inner{\ler{\sum_{j=1}^m \alpha_j \ler{\varphi\ler{x_j}}^{-1}}^{-1}\ler{\sum_{j=1}^m \alpha_j v_j}}{\sum_{j=1}^m \alpha_j v_j}
\leq
\sum_{j=1}^m \alpha_j \inner{\ler{\ler{\varphi\ler{x_j}}^{-1}}^{-1}v_j}{v_j}.
\ee
Inequalities \eqref{eq:ingr1} and \eqref{eq:ingr2} together imply the desired convexity property
$$
\inner{\varphi\ler{\sum_{j=1}^m \alpha_j x_j} \ler{\sum_{j=1}^m \alpha_j v_j}}{\sum_{j=1}^m \alpha_j v_j}
\leq
\sum_{j=1}^m \alpha_j \inner{\varphi\ler{x_j} v_j}{v_j}.
$$

To prove the direction \ref{prop:conv} $\Longrightarrow$ \ref{prop:conc} let us consider an arbitrary vector $u \in \cH.$ Let us define
\be \label{eq:wi-def}
w_i:=\ler{\varphi\ler{x_i}}^{-1} \circ \ler{\sum_{j=1}^m \alpha_j \ler{\varphi\ler{x_j}}^{-1}}^{-1} (u).
\ee
Observe that $\sum_{i=1}^m \alpha_i w_i= u.$
Therefore, the expression $\inner{\varphi\ler{\sum_{i=1}^m \alpha_i x_i} u}{u}$ can be written in the form $\inner{\varphi\ler{\sum_{i=1}^m \alpha_i x_i} \ler{\sum_{i=1}^m \alpha_i w_i}}{\sum_{i=1}^m \alpha_i w_i}$ and $\inner{\ler{\sum_{j=1}^m \alpha_j \ler{\varphi\ler{x_j}}^{-1}}^{-1} (u)}{u}$ can be written as $\sum_{i=i}^m \alpha_i \inner{\varphi\ler{x_i} w_i}{w_i}$ because
$$
\sum_{i=i}^m \alpha_i \inner{\varphi\ler{x_i} w_i}{w_i}=
\sum_{i=i}^m \alpha_i \inner{\varphi\ler{x_i} \circ \ler{\varphi\ler{x_i}}^{-1} \circ \ler{\sum_{j=1}^m \alpha_j \ler{\varphi\ler{x_j}}^{-1}}^{-1} (u)}{w_i}
$$
$$
=
\inner{\ler{\sum_{j=1}^m \alpha_j \ler{\varphi\ler{x_j}}^{-1}}^{-1} (u)}{\sum_{i=1}^m \alpha_i w_i}
=
\inner{\ler{\sum_{j=1}^m \alpha_j \ler{\varphi\ler{x_j}}^{-1}}^{-1} (u)}{u}.
$$
By the convexity assumption \ref{prop:conv} we have
$$
\inner{\varphi\ler{\sum_{i=1}^m \alpha_i x_i} \ler{\sum_{i=1}^m \alpha_i w_i}}{\sum_{i=1}^m \alpha_i w_i}
\leq
\sum_{i=i}^m \alpha_i \inner{\varphi\ler{x_i} w_i}{w_i}
$$
which is equivalent to
$$
\inner{\varphi\ler{\sum_{i=1}^m \alpha_i x_i} u}{u} \leq \inner{\ler{\sum_{j=1}^m \alpha_j \ler{\varphi\ler{x_j}}^{-1}}^{-1} (u)}{u}
$$
by the above observations. The vector $u$ was arbitrary, which means that we have
\be \label{eq:majd-conc}
\varphi\ler{\sum_{i=1}^m \alpha_i x_i} \leq \ler{\sum_{j=1}^m \alpha_j \ler{\varphi\ler{x_j}}^{-1}}^{-1}
\ee
in the L\"owner sense. Taking the inverse reverses the order on positive definite operators, hence \eqref{eq:majd-conc} is equivalent to the desired concavity property
$$
\ler{\varphi\ler{\sum_{i=1}^m \alpha_i x_i}}^{-1} \geq \sum_{j=1}^m \alpha_j \ler{\varphi\ler{x_j}}^{-1}.
$$
\end{proof}

\subsection{The proof of Theorem \ref{thm:main}} \label{subsec:proof}
\begin{proof}
Let us consider the direction \ref{prop:mec} $\Longrightarrow$ \ref{prop:jc-jen-div}. Assume that the map $M_n^{++}(\C) \rightarrow \cB\ler{M_n^{sa}(\C)}^{sa};\ A \mapsto \ler{\mathbf{D}f'[A]}^{-1}$ is concave with respect to the L\"owner order. Note that by the property $f''>0$ the operator $\mathbf{D}f'[A]$ is positive definite for every $A \in M_n^{++}(\C)$ (see Remark \ref{rem:of}). This means that the image of the map $A \mapsto \mathbf{D}f'[A]$ is contained in $\cB\ler{M_n^{sa}(\C)}^{++}.$ Therefore, we can apply Claim \ref{claim:conc-conv} and deduce that the map
\be \label{eq:konkr-konv}
M_n^{++}(\C) \times M_n^{sa}(\C) \rightarrow [0, \infty); \, (X,Y) \mapsto \inner{\mathbf{D}f'[X]\{Y\}}{Y}_{HS}
\ee
is jointly convex. Furthermore, for fixed $\l \in (0,1)$ and $t,s \in [0,1]$ the map $\xi_{\l,t,s}: \, (A, B) \mapsto \xi_{\l,t,s}(A,B)$ is affine, that is, $\xi_{\l,t,s}\ler{\sum_{j=1}^m \alpha_j A_j, \sum_{j=1}^m \alpha_j B_j}=\sum_{j=1}^m \alpha_j \xi_{\l,t,s}\ler{A_j,B_j}.$ (For the definition of $\xi_{\l,t,s}$ see formula \eqref{eq:xi-def}.) The convexity of \eqref{eq:konkr-konv} and the affine property of $\xi_{\l,t,s}$ together imply that
$$
\inner{\mathbf{D}f'\left[\xi_{\l,t,s}\ler{\sum_{j=1}^m \alpha_j A_j, \sum_{j=1}^m \alpha_j B_j}\right]\left\{\sum_{j=1}^m \alpha_j B_j-\sum_{j=1}^m \alpha_j A_j\right\}}{\sum_{j=1}^m \alpha_j B_j-\sum_{j=1}^m \alpha_j A_j}_{HS}
$$
$$
=\inner{\mathbf{D}f'\left[\sum_{j=1}^m \alpha_j \xi_{\l,t,s}\ler{A_j,B_j}\right]\left\{\sum_{j=1}^m \alpha_j \ler{B_j-A_j}\right\}}{\sum_{j=1}^m \alpha_j \ler{B_j-A_j}}_{HS}
$$
$$
\leq \sum_{j=1}^m \alpha_j \inner{\mathbf{D}f'\left[\xi_{\l,t,s}\ler{A_j,B_j}\right]\left\{B_j-A_j\right\}}{B_j-A_j}_{HS}
$$
holds for all $A_1, \dots A_m, B_1, \dots, B_m \in M_n^{++}(\C)$ and $\alpha_1, \dots, \alpha_m \in [0,1]$ with $\sum_{j=1}^m \alpha_j =1.$ Consequently,
\footnotesize
$$
\l (1-\l)\int_{0}^{1}\ler{(1-t)\l+t(1-\l)} \int_{0}^1
\inner{\mathbf{D}f'\left[\xi_{\l,t,s}\ler{\sum_{j=1}^m \alpha_j A_j, \sum_{j=1}^m \alpha_j B_j}\right]\left\{\sum_{j=1}^m \alpha_j B_j-\sum_{j=1}^m \alpha_j A_j\right\}}{\sum_{j=1}^m \alpha_j B_j-\sum_{j=1}^m \alpha_j A_j}_{HS}\ds \dt
$$
\normalsize
$$
\leq 
\l (1-\l)\int_{0}^{1}\ler{(1-t)\l+t(1-\l)} \int_{0}^1
\ler{\sum_{j=1}^m \alpha_j \inner{\mathbf{D}f'\left[\xi_{\l,t,s}\ler{A_j,B_j}\right]\left\{B_j-A_j\right\}}{B_j-A_j}_{HS} }\ds \dt.
$$
By the result of Claim \ref{claim:int-rep}, the above inequality is nothing else but $J_{f,\l}\ler{\sum_{j=1}^m \alpha_j A_j, \sum_{j=1}^m \alpha_j B_j} \leq \sum_{j=1}^m \alpha_j J_{f,\l} \ler{A_j, B_j},$ which is the desired joint convexity property.
\par
Now we prove the direction \ref{prop:jc-jen-div} $\Longrightarrow$ \ref{prop:mec}.
Let $A_1, \dots A_m \in M_n^{++}(\C), \, B_1, \dots, B_m \in M_n^{sa}(\C)$ and $\alpha_1, \dots, \alpha_m \in [0,1]$ with $\sum_{j=1}^m \alpha_j =1$ be arbitrary but fixed. Let us choose $\eps_0>0$ such that $A_j+\eps B_j \in M_n^{++}(\C)$ for every $0 < \eps <\eps_0$ and $j \in \{1,\dots, m\}.$ By assumption, we have 
$$
J_{f,\l}\ler{\sum_{j=1}^m \alpha_j A_j, \sum_{j=1}^m \alpha_j A_j + \eps \sum_{j=1}^m \alpha_j B_j }\leq \sum_{j=1}^m \alpha_j J_{f,\l}\ler{A_j, A_j+\eps B_j}.
$$
By the integral representation of the Jensen divergences (Claim \ref{claim:int-rep}), the above inequality is equivalent to
\small
$$
\eps^2 \l (1-\l)\int_{0}^{1}\ler{(1-t)\l+t(1-\l)} \int_{0}^1 \inner{\mathbf{D}f'\left[\xi_{\l,t,s}\ler{\sum_{j=1}^m \alpha_j A_j, \sum_{j=1}^m \alpha_j A_j + \eps \sum_{j=1}^m \alpha_j B_j }  \right]\left\{\sum_{j=1}^m \alpha_j B_j\right\}}{\sum_{j=1}^m \alpha_j B_j}_{HS} \ds \dt
$$
\normalsize
\be \label{eq:egyen}
\leq \eps^2 \sum_{j=1}^m \alpha_j \l (1-\l)\int_{0}^{1}\ler{(1-t)\l+t(1-\l)} \int_{0}^1 \inner{\mathbf{D}f'\left[\xi_{\l,t,s}\ler{A_j,A_j+\eps B_j}  \right]\left\{B_j\right\}}{B_j}_{HS} \ds \dt.
\ee
The continuity of $f''$ ensures that the map $X \mapsto \mathbf{D} f' [X]$ is continuous. It is clear by the formula \eqref{eq:xi-def} that $\lim_{\eps \to 0} \xi_{\l,t,s}\ler{\sum_{j=1}^m \alpha_j A_j, \sum_{j=1}^m \alpha_j A_j + \eps \sum_{j=1}^m \alpha_j B_j }=\sum_{j=1}^m \alpha_j A_j$ and $\lim_{\eps \to 0} \xi_{\l,t,s}\ler{A_j,A_j+\eps B_j}=A_j.$ Therefore, if we divide the inequality \eqref{eq:egyen} by $\eps^2$ and take the limit $\eps \to 0$ then we get that
$$
\l (1-\l)\int_{0}^{1}\ler{(1-t)\l+t(1-\l)} \int_{0}^1 \inner{\mathbf{D}f'\left[\sum_{j=1}^m \alpha_j A_j \right]\left\{\sum_{j=1}^m \alpha_j B_j\right\}}{\sum_{j=1}^m \alpha_j B_j}_{HS} \ds \dt
$$
$$
\leq \sum_{j=1}^m \alpha_j \l (1-\l)\int_{0}^{1}\ler{(1-t)\l+t(1-\l)} \int_{0}^1 \inner{\mathbf{D}f'\left[A_j  \right]\left\{B_j\right\}}{B_j}_{HS} \ds \dt,
$$
that is,
$$
\inner{\mathbf{D}f'\left[\sum_{j=1}^m \alpha_j A_j \right]\left\{\sum_{j=1}^m \alpha_j B_j\right\}}{\sum_{j=1}^m \alpha_j B_j}_{HS}
\leq \sum_{j=1}^m \alpha_j \inner{\mathbf{D}f'\left[A_j  \right]\left\{B_j\right\}}{B_j}_{HS}.
$$
So we deduced that the map given by \eqref{eq:konkr-konv} is jointly convex. By Claim \ref{claim:conc-conv} this means that the map $A \mapsto \ler{\mathbf{D}f'[A]}^{-1}$ is concave with respect to the L\"owner order, and hence the function $f$ belongs to the Matrix Entropy Class. The proof is done.
\end{proof}

\bibliographystyle{amsplain}

\end{document}